\algrenewcommand\algorithmicindent{0.4em}
\newtheorem{theorem}{Theorem}
\begin{document}

\twocolumn

\title{SecuLEx: a Secure Limit Exchange Market for Dynamic Operating Envelopes}

\author{

    \IEEEauthorblockN{
        Maurizio Vassallo$^{a*}$,
        Adrien Bolland$^{a*}$,
        Alireza Bahmanyar$^b$,
        Louis Wehenkel$^{a}$,
        Laurine Duchesne$^b$,
        Dong Liu$^c$, \\
        Sania Khaskheli$^d$,
        Alexis Ha Thuc$^e$,
        Pedro P. Vergara$^c$,
        Amjad Anvari-Moghaddam$^d$,
        Simon Gerard$^f$,
        Damien Ernst$^a$
    }
    
    \IEEEauthorblockA{
        $^a$University of Liège, Belgium,\\
        $^b$Haulogy, Belgium, \\
        $^c$Delft University of Technology, The Netherlands,\\
        $^d$Aalborg University, Denmark,\\
        $^e$INSA Lyon, France,\\
        $^f$RESA, Belgium.
    }
}

\maketitle

\def\thefootnote{*}\footnotetext{These authors contributed equally to this work}\def\thefootnote{\arabic{footnote}}

\begin{abstract}
Distributed energy resources (DERs) are transforming power networks, challenging traditional operational methods, and requiring new coordination mechanisms.
To address this challenge, this paper introduces SecuLEx (Secure Limit Exchange), a new market-based paradigm to allocate power injection and withdrawal limits that guarantee network security during time periods, called dynamic operating envelopes (DOEs).
Under this paradigm, distribution system operators (DSOs) assign initial DOEs to customers. 
These limits can be exchanged afterward through a market, allowing customers to reallocate them according to their needs while ensuring network operational constraints.
We formalize SecuLEx and illustrate DOE allocation and market exchanges on a small-scale low-voltage (LV) network, demonstrating that both procedures are computationally tractable. 
In this example, SecuLEx reduces renewable curtailment and improves grid utilization and social welfare compared to traditional approaches.
\end{abstract}

\begin{IEEEkeywords}
    Power Distribution Systems,
    Dynamic Operating Envelopes,
    Continuous Market,
    Distributed Energy Resources,
    Power Network Optimization
\end{IEEEkeywords}

\section{Introduction}
\label{introduction}
The rapid penetration of distributed energy resources (DERs), such as photovoltaic (PV) systems, electric vehicles (EVs), and battery energy storage systems (BESSs), is transforming the operation of electricity distribution networks.
Originally designed for centralized generation and relatively predictable, unidirectional power flows, today’s networks are being pushed toward more volatile and decentralized energy generation.
This challenges the secure operation of distribution networks, consisting of avoiding large voltage deviations and line congestion, potentially compromising their reliability.
For instance, the widespread adoption of rooftop PV can lead to reverse power flows and overvoltage issues \cite{PVproblems2020}. 
At the same time, the growing popularity of EVs may worsen network congestion, particularly as peak EV charging tends to coincide with the existing evening peak \cite{EVproblems2018}.
These developments necessitate a new generation of coordination mechanisms to ensure the secure and reliable integration of DERs \cite{ReviewDERChallenges2021}.

Recent reports estimate that renewable energy curtailment due to network constraints costs the European Union between €1.5 and €2.5 billion annually \cite{Thomassen2024Curtailment}.
Historically, distribution networks were built under a “fit and forget” philosophy, where infrastructure was sized to accommodate static worst-case consumption scenarios. Such designs were based on unidirectional power flows and stable demand patterns. However, the high magnitude of bidirectional flows introduced by emerging DERs was not anticipated. As a result, even oversized networks often fail to integrate new DERs efficiently, leading to voltage violations, thermal overloads, and mismatches between generation and demand.

To solve these network issues, different centralized control strategies have been proposed. 
Active network management (ANM) allows system operators to dynamically curtail PV injections, coordinate flexible loads, or schedule DERs to reduce the stress on the grid \cite{Gemine2017}. 
Building on these ANM principles, specific strategies such as controlled EV charging \cite{EvAnm2014} have been developed to enhance network hosting capacity (HC) and avoid extensive grid reinforcement, complementing traditional approaches like targeted infrastructure upgrades \cite{Meunier2021}.
However, these centralized methods face practical limitations in scalability and adaptability in networks with high DER penetration and limited observability \cite{ZHANG2024110076}.

To address the limitations of centralized control, the concept of dynamic operating envelopes (DOEs) has emerged as a potential solution \cite{DOEs2024}.
DOEs are dynamic power limits on customer injections and withdrawals that ensure network security.
Various methods have been proposed for DOE computation. Authors in \cite{Petrou2021} developed a dynamic approach for assigning export limits to PV devices, explicitly accounting for network constraints and fairness considerations. Authors in \cite{vergara2023mixed} proposed a mixed-integer linear programming (MILP) formulation to compute customer-specific limits that balance network feasibility and fairness. Similarly, \cite{gupta2024} proposed a day-ahead stochastic optimization framework for computing fair PV limits using a linear approximation of the AC power flow model.
While these approaches ensure network security and fair allocation, they treat DOEs as fixed and non-transferable, limiting the ability to fully leverage the network’s flexibility under high DER penetration.

In parallel, decentralized market mechanisms have emerged as tools for coordinating DERs in distribution systems. Local energy markets (LEMs) enable peer-to-peer (P2P) trading of energy, facilitating localized balancing and reducing reliance on central infrastructure \cite{LEM2019, Liu2024ASE}. 
Other works explore market-clearing mechanisms that incorporate voltage and network constraints directly into auction models \cite{Market2015}, or game-theoretic frameworks to manage uncertainties in networks with high integration of DERs \cite{dolatabadi2024stochastic}.
Market mechanisms have demonstrated efficiency in terms of cost-effectiveness and improved network utilization in energy trading, but their potential for DOE allocation has so far remained largely unexplored \cite{SCOPE2025}.


This paper presents SecuLEx, a novel market-based framework that enables customers to trade their DOEs while ensuring network security. 
Unlike existing static approaches, SecuLEx provides a framework through which customers can reallocate limits according to their needs, offering a possible improvement in the use of existing network capacity.
The main contributions of this work are the following:
\begin{itemize}

    \item We propose a novel DOE allocation mechanism based on a lexicographic max–min formulation, which guarantees fairness across customers and serves as the initial allocation before the market exchanges.

    \item We define the SecuLEx market, including order specifications, market clearing with security verification, and settlement mechanisms.

\end{itemize}

The rest of the paper is organized as follows. 
Section \ref{envelopes} presents the mathematical foundation for grid operation using DOEs, including network representation, DOE definitions, security verification mechanisms, and DOE allocation. 
Section \ref{blues_market} introduces the SecuLEx market organization, covering temporal market structure, order specifications, and the clearing and settlement mechanisms. 
Section \ref{implementation} provides a concrete implementation and illustration of SecuLEx for a radial LV network.
Section \ref{discussion} discusses key design choices and limitations of the proposed approach. 
Section \ref{conclusion} concludes the paper and outlines future research directions.
\section{Grid operation with envelopes}
\label{envelopes}
This section establishes the mathematical foundation for operating distribution networks using DOEs. 
Section \ref{ss:network} presents the network model.
Section \ref{ss:does} formalizes the concept of DOEs.
Section \ref{ss:verification} introduces the security verification function that checks network security under the given DOEs. 
Section \ref{ss:CalculateLimitsFunction} formulates the optimization for DOE allocation.

\subsection{Network representation}
\label{ss:network}
An electrical network is modeled as a directed graph, denoted by $\mathfrak{G} = (\mathcal{N}, \mathcal{E})$, where $\mathcal{N}$ represents the set of nodes and $\mathcal{E} \subset \mathcal{N} \times \mathcal{N}$ represents the set of directed edges. 
Each node $n \in \mathcal{N}$ corresponds to a network component, such as a transformer or a customer. Each edge $e \in \mathcal{E}$ establishes a connection between two nodes of the set $\mathcal{N}$ and is commonly referred to as a line. The orientation is arbitrarily fixed by convention to simplify later modeling.

\subsubsection{Network constraints}
Each node $n \in \mathcal{N}$ has an associated voltage $V_n \in \mathbb{C}$, and $V \in \mathbb{C}^{|\mathcal{N}|}$ is the voltage vector composed of the voltages at each node $n \in \mathcal{N}$.
The voltage magnitude at each node $n \in \mathcal{N}$ must satisfy the voltage magnitude constraint $\underline{V} \leq | V_n |\leq \overline{V}$, where $\underline{V}, \overline{V} \in \mathbb{R}_+$ are the minimum and maximum allowable voltage magnitudes.
Similarly, each line $e \in \mathcal{E}$ has an associated current flowing through the line $I_e \in \mathbb{C}$, and $I \in \mathbb{C}^{|\mathcal{E}|}$ is the current vector composed of the current flows at each line $e \in \mathcal{E}$.
Each line $e \in \mathcal{E}$ must satisfy the current magnitude constraint $|I_{e}| \leq \overline{I}$, where $\overline{I} \in \mathbb{R}_+$ is the maximum current allowed through a line.


\subsubsection{Customers' power usage}
Let $\mathcal{C} \subset \mathcal{N}$ be the set of customer nodes, i.e., nodes where power is withdrawn or injected.
Let $S_{c} = P_{c} + j Q_{c} \in \mathbb{C}$ denote the complex power consumption of $c \in \mathcal{C}$, composed of an active part $P_{c} \in \mathbb{R}$ and a reactive part $Q_{c} \in \mathbb{R}$.
By convention, positive power values correspond to withdrawal and negative values to injection.


\subsection{DOEs definition}
\label{ss:does}
The DSO assigns DOEs, specifying the upper and lower limits of active and reactive power each customer is allowed to consume. These limits are dynamic, meaning that they can change over time, e.g., through trading or through a reallocation imposed by the DSO. 
When all customers respect their limits, the network is guaranteed to operate within secure voltage and current limits.
DOEs are specified by the complex matrix $\mathbf{L}$ of dimension $|\mathcal{C}| \times 2$, defined as:
\begin{align}
    \mathbf{L} = [\underline{S}, \overline{S}] =
    \begin{blockarray}{cc}
        \begin{block}{[cc]}
        \underline{S}_{\, 1}  & \overline{S}_{1}   \\
        \vdots  & \vdots   \\
        \underline{S}_{\, |\mathcal{C}|}  & \overline{S}_{|\mathcal{C}|}   \\
        \end{block}
    \end{blockarray} ,
\end{align}
where $\underline{S}_c =  \underline{P}_c + j \underline{Q}_c$ and $\overline{S}_c = \overline{P}_c + j \overline{Q}_c$ represent the lower and upper limits for customer $c \in \mathcal{C}$.
We write $S = P + jQ \in \mathbf{L}$ when $P$ and $Q$ both respect the bounds defined in the DOE matrix for each customer.

\subsection{Security verification function}
\label{ss:verification}
To ensure secure network operations, we introduce a verification function $VerifyLimits$ that determines whether a given DOE allocation preserves network security. 
The function returns a non-positive value if and only if the limit matrix $\mathbf{L}$ ensures secure operation:
\begin{equation}
\label{eq:VerifyLimitsGeneric}
\begin{cases}
    VerifyLimits(\mathfrak{G}, \mathbf{L}) \leq 0 & \forall S \in \mathbf{L}: \mathfrak{G} \text{ is secure}, \\
   VerifyLimits(\mathfrak{G}, \mathbf{L}) > 0 & \text{otherwise}.
\end{cases}
\end{equation}
Evaluating this function requires solving the power flow:
\begin{align}
    V, I = PowerFlow(\mathfrak{G}, S) && \forall S \in \mathbf{L} \; , \label{eq:PowerFlow}
\end{align}
and verifying that the voltage limit $\underline{V} \leq |V| \leq \overline{V}$ and current limit $|I| \leq \overline{I}$ are respected $\forall S \in \mathbf{L}$. 

\subsection{DOEs allocation}
\label{ss:CalculateLimitsFunction}
DOE allocation is formulated as a constrained optimization problem that maximizes network utility while guaranteeing secure operation for any power profile within the assigned limits.  
This optimization problem is formulated as follows:
\begin{subequations}
\label{eq:CalculateDOEs}
\begin{align}
    \max_{\mathbf{L}} \quad & U(\mathbf{L}), \\
    \text{s.t.} \quad & VerifyLimits(\mathfrak{G}, \mathbf{L}) \leq 0 \; ,
\end{align}
\end{subequations}
where $U$ is a general utility function representing the DSO's objective, such as maximizing customer flexibility, fairness, or system efficiency. The function $VerifyLimits$ ensures that, for all power injections within limits, the network remains within secure operational boundaries.
\section{SecuLEx market organization}
\label{blues_market}
This section presents the market framework that enables customers to exchange their DOEs.
Section \ref{ss:temporaloperation} defines the temporal organization of trading periods.
Section \ref{ss:orderbook} specifies the order structure and book management for buy and sell requests of envelope portions.
Section \ref{ss:clearing} introduces the clearing mechanism and Section \ref{ss:settlement} the settlement function that determines the money exchanges between customers.

\subsection{Limit market temporal operation}
\label{ss:temporaloperation}
In SecuLEx, DOEs are assigned in advance for future time periods, called \textit{products}. For example, the DSO may assign limits one day in advance for all hour intervals of the following day. A product is identified by its starting time $t \in \mathcal{T}$ (e.g., $\mathcal{T} = \{\text{00:00} \, , \text{01:00} \, ,\text{02:00} \, , \ldots \, , \text{23:00}\}$ for 24-hour intervals in a day). 
For each product $t \in \mathcal{T}$, the exchange window opens immediately after the DOE assignment and closes shortly before the corresponding time period. This allows participants to adjust their limits to reflect updated forecasts or preferences.

Figure~\ref{fig:marketTimeline} illustrates the temporal structure of a single product: the DSO assigns DOEs, the continuous exchange window opens, the order book fills, and trading continues until market closure just before real-time operation. 
\begin{figure}[h!]
\centering
\includegraphics[width=0.49\textwidth]{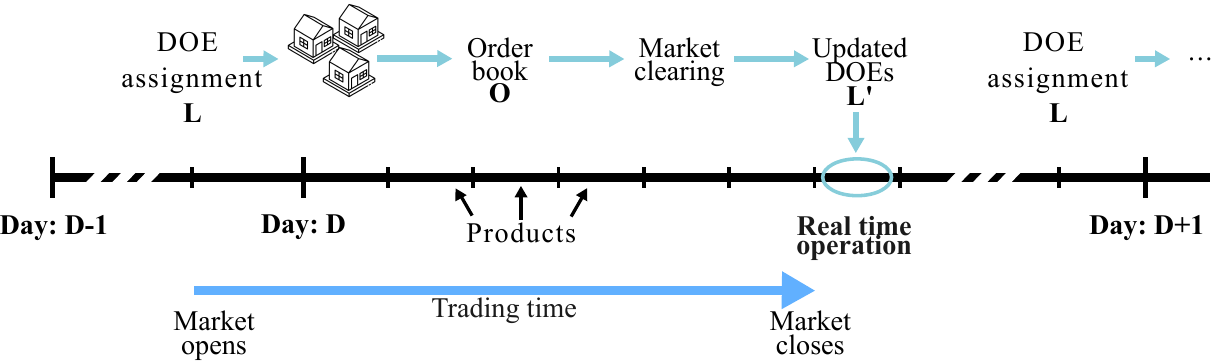}
\caption{Timeline illustrating the sequence from the initial DOE assignment by the DSO to the closing of the continuous exchange window just before the operating period.}
\label{fig:marketTimeline}
\end{figure}

\subsection{Orders and order book}
\label{ss:orderbook}
Customers interact with the market by submitting orders to buy and sell portions of lower and upper limits. 
Each order expresses a customer's willingness to change their DOE limits.
In practice, customers may ask for additional upper or lower limits to gain more withdrawal or injection capacity, respectively. 
Similarly, customers may bid part of their upper or lower limits if they do not expect to use their full withdrawal or injection capacity, respectively. 
Formally, each order $o$ is defined as the tuple:
\begin{align}
    o = (id, \; c, \; type, \; bound, \; power, \; \Delta, \; \pi, \; t),
\end{align}
where:
\begin{itemize}
    \item $id \in \mathbb{N}$ is a unique identifier,
    \item $c \in \mathcal{C}$ is the customer submitting the order,
    \item $type \in \{\text{buy}, \text{ sell}\}$ distinguishes demand from supply,
    \item $bound \in \{\text{lower}, \text{ upper}\}$ indicates whether the order concerns the lower or upper limit of the DOE,
    \item $power \in \{\text{active}, \text{reactive} \}$ is the nature of the change,
    \item $\Delta \in \mathbb{R}_+$ is the requested change in limit (kW or kVAR).
    \item $\pi \in \mathbb{R}$ is the offered price (€/kW or €/kVAR),
    \item $t \in \mathcal{T}$ is the target time period (i.e., product time),
\end{itemize}
Orders are stored in the order book $\mathcal{O}$, which we formalize as the set of order identifiers.
We define $\mathcal{S}$ and $\mathcal{B}$ as the sets of identifiers for the sell and buy orders. 
Similarly, $\mathcal{L}$ and $\mathcal{U}$ are the sets of identifiers for orders on lower and upper limits.

\subsection{Clearing function}
\label{ss:clearing}
Given the limit matrix $\mathbf{L}$ and the order book $\mathcal{O}$, a market operator clears the market by matching buy and sell orders. 
The clearing process determines which orders are accepted and updates both the DOE limits and the order book accordingly. This process is formalized as:
\begin{equation}
\label{eq:ClearGeneric}
 \mathbf{L}', \mathcal{O}' = Clear(\mathbf{L}, \mathcal{O})\;,
\end{equation}
where $\mathbf{L}'$ is the updated limit matrix reflecting all accepted orders and $\mathcal{O}'$ is the updated order book containing unmatched and partially matched orders.

The updated limit matrix $\mathbf{L}'$ must preserve network security:
\begin{equation}
    VerifyLimits(\mathfrak{G}, \mathbf{L}') \leq 0\;.
\end{equation}
We assume the market operator clears orders instantaneously and publishes updated DOEs immediately, enabling continuous trading and real-time adjustment of operating limits throughout the trading window.

\subsection{Settlement function}
\label{ss:settlement}
Once orders are executed and limits updated, a settlement process determines the financial compensation for each customer.  
We define the settlement function as:
\begin{equation}
    \Pi = Settlement(\mathbf{L}, \mathbf{L}', \mathcal{O}, \mathcal{O}')\;,
    \label{eq:settlement}
\end{equation}
where $\Pi \in \mathbb{R}^{|\mathcal{C}|}$ is the resulting payment vector composed of the price compensations $\Pi_c$ for each customer $c \in \mathcal{C}$.
A positive $\Pi_c$ indicates a payment by the customer $c \in \mathcal{C}$, and a negative $\Pi_c$ indicates a payment to the customer $c \in \mathcal{C}$. 

Similarly to the clearing function, we assume that the settlement is computed instantaneously after clearing and that the payment is due to the market operator. 

\section{Concrete implementation of SecuLEx}
\label{implementation}
This section provides a practical implementation of SecuLEx that achieves computational efficiency while preserving security guarantees.
Section \ref{ss:gridoperation} describes a fair DOE allocation mechanism.
Section \ref{ss:marketimplementation} describes a computationally efficient clearing mechanism for maximizing social welfare while ensuring trades preserve network security. 
Section \ref{ss:example} provides an illustrative example.

\subsection{Grid operation implementation}
\label{ss:gridoperation}
\subsubsection{Security verification function}
without additional assumptions, the security verification function in Eq.~\ref{eq:VerifyLimitsGeneric} requires solving a power flow for each $S \in \mathbf{L}$ and verifying that the voltages and currents respect the security constraints.

In this implementation, we consider a radial network and adopt a DC power flow approximation of the power flow Eq.~\ref{eq:PowerFlow}.
In this approximation, voltages take constant real values, and the complex parts of currents and powers equal zero.
Security is then expressed as constraints on the active power flow through lines, rather than voltage and current constraints. DOEs are also reduced to limits on active power and $\mathbf{L} = [ \underline{P} \, , \overline{P} ]$.
Under these assumptions, we prove in Appendix \ref{app:proof} that the security verification function in Eq. \ref{eq:VerifyLimitsGeneric} simplifies to confirming that the network $\mathfrak{G}$ remains secure only for the two boundaries, $\underline{P}$ and $\overline{P}$, of the DOE matrix:
\begin{align}
\label{eq:VerifyLimits}
 \begin{cases}
  VerifyLimits(\mathfrak{G}, \mathbf{L}) \leq 0 & \text{if both $\mathfrak{G}$ secure for $\underline{P}$} \; ,  \\
    & \phantom{\text{if both }}\text{$\mathfrak{G}$ secure for $\overline{P}$} \; ; \\
  VerifyLimits(\mathfrak{G}, \mathbf{L}) > 0 & \text{otherwise}.
\end{cases}
\end{align}

\subsubsection{DOE allocation} we assume that the DSO wants to maximize the envelope sizes while guaranteeing fairness, in the sense that the size of the smallest envelopes is maximized first. Such a multi-objective function can be formulated as a (complex) maximum utility problem as in Eq. \ref{eq:CalculateDOEs}, which eventually corresponds to solving the following (simpler) lexicographic max-min optimization problem \cite{behringer1977lexicographic}:
\begin{subequations}
\label{eq:CalculateDOEs_LV}
\begin{align}
    \underset{\mathbf{L}}{\text{lex max min}} \quad & | \underline{P}_c - \overline{P}_c| && \forall c \in \mathcal{C} \label{eq:optFunc} \; , \\
    \text{s.t.} \quad
    & VerifyLimits(\mathfrak{G}, \mathbf{L}) \leq 0 \; , \label{eq:security_constraint} \\
    & \underline{P}^{\; contr}_c \leq \underline{P}_{c} \leq \underline{P}^{\; guar}_c
    && \forall c \in \mathcal{C} \; , \\
    & \overline{P}^{\; guar}_c \leq \overline{P}_{c} \leq \overline{P}^{\; contr}_c
    && \forall c \in \mathcal{C} \; , \\
    & \underline{P}_c \!\le\! \overline{P}_c && \forall c \in \mathcal{C} \; .
\end{align}
\end{subequations}
The parameters $\underline{P}^{\; guar}_c, \overline{P}^{\; guar}_c$ denote minimum guaranteed withdrawal/injection limits, while $\underline{P}^{\; contr}_c, \overline{P}^{\; contr}_c$ denote the contractual extrema based on equipment capacity.

Complex in appearance, the problem in Eq. \ref{eq:CalculateDOEs_LV} can easily be solved by successively maximizing the size of the smallest envelope that can still be enlarged. Let $\mathcal{C'} \subseteq \mathcal{C}$ be the set of customers whose envelopes are already fixed to $[\underline{P}_{c'}^*, \overline{P}_{c'}^*]$, for all $c' \in \mathcal{C'}$. New envelopes are iteratively computed by solving:
\begin{subequations}
\label{eq:CalculateDOEs_LV_Iterative}
\begin{align}
    \max_{w, \mathbf{L}} \quad & w \; , \\
    \text{s.t.} \quad
    & VerifyLimits(\mathfrak{G}, \mathbf{L}) \leq 0 \; , \\
    & \underline{P}^{\; contr}_c \leq \underline{P}_{c} \leq \underline{P}^{\; guar}_c
    && \forall c \notin \mathcal{C'} \; , \\
    & \overline{P}^{\; guar}_c \leq \overline{P}_{c} \leq \overline{P}^{\; contr}_c
    && \forall c \notin \mathcal{C'} \; , \\
    & \underline{P}_{c} = \underline{P}_{c}^*
    && \forall c \in \mathcal{C'} \; , \\
    & \overline{P}_{c} = \overline{P}_{c}^*
    && \forall c \in \mathcal{C'} \; , \\
    & w \leq \overline{P}_c - \underline{P}_c
    && \forall c \notin \mathcal{C'} \; , \label{eq:infinity_norm_constraint}\\
    & \underline{P}_c \!\le\! \overline{P}_c && \forall c \in \mathcal{C} \; .
\end{align}
\end{subequations}
The solution to the optimization problem is the DOE matrix $\mathbf{L}^*$ and the width $w^*$. 
The customer(s) for which the constraint Eq. \ref{eq:infinity_norm_constraint} is tight, i.e., $\overline{P}_c^* - \underline{P}_c^* = w^*$, have their envelopes fixed at the current values $\mathbf{L}_c^* = [\underline{P}_c^*, \overline{P}_c^*]$. 
These customers are added to $\mathcal{C}'$, and the procedure repeats until all customers have been assigned fixed envelopes.

\subsection{SecuLEx market implementation} 
\label{ss:marketimplementation}

\subsubsection{Clearing function} the clearing function implements a centralized market where orders are accepted to maximize the social welfare. Here, accepted orders lead to updated envelopes that are constrained by the security verification function. Assuming orders can be partially cleared, we compute for each order $o \in \mathcal{O} = \mathcal{B} \cup \mathcal{S}$ the accepted quantity $a_o$ by solving:
\begin{subequations}
\label{eq:market-clear}
\begin{align}
    \max_{a, \mathbf{L}'} \quad & \sum_{b\in\mathcal{B}} \pi_b a_b - \sum_{s\in\mathcal{S}} \pi_s a_s \; , \\
    \text{s.t.}\quad
        & 0 \le a_o \le \Delta P_o && \forall o \in\mathcal{O} \; ,  \\
        & \underline P_c' = \underline P_c + \! \sum_{\substack{s \in \mathcal{S} \cap \mathcal{L}: \\ c_s = c}} \! a_s - \! \sum_{\substack{b \in \mathcal{B} \cap \mathcal{L}: \\ c_b = c}} \! a_b && \forall c \in \mathcal{C} \; , \\
        & \overline P_c' = \overline P_c - \! \sum_{\substack{s \in \mathcal{S} \cap \mathcal{U}: \\ c_s = c}} \! a_s + \! \sum_{\substack{b \in \mathcal{B} \cap \mathcal{U}: \\ c_b = c}} \! a_b && \forall c \in \mathcal{C} \; , \\
        & \underline{P}_c' \!\le\! \overline{P}_c' && \forall c \in \mathcal{C} \; , \\
        & \mathbf{L}' = [\underline{P}', \overline{P}'] \; , \\
        & VerifyLimits(\mathfrak{G}, \mathbf{L}') \leq 0 \; .
\end{align}
\end{subequations}
Solving this optimization problem provides the updated secure limits $\mathbf{L}'$ and the updated order book $\mathcal{O}'$ containing the uncleared quantities.

\subsubsection{Settlement function}
orders are settled on a pay-as-bid settlement mechanism. 
Each customer's total payment equals the sum of their individual order payments, where each order payment is the order price multiplied by the cleared quantity:
\begin{align}
\label{eq:settlement_LV}
\Pi_c &= \sum_{b \in \mathcal{B}: c_b = c} a_b \pi_b - \sum_{s \in \mathcal{S}: c_s = c} a_s \pi_s \; .
\end{align}

The pay-as-bid settlement leads to a surplus that equals the social welfare $SW$ and that is captured by the market operator:
\begin{align}
    SW = \sum_{c \in \mathcal{C}} \Pi_c = \sum_{b\in\mathcal{B}} \pi_b a_b - \sum_{s\in\mathcal{S}} \pi_s a_s \; .
\end{align}

\subsection{Illustrative example}
\label{ss:example}
This section illustrates SecuLEx on a simple radial low-voltage (LV) distribution network.
The illustration follows the temporal structure outlined in Fig.~\ref{fig:marketTimeline}, considering the DOE assignment on Day $D-1$ for a single period of one hour (12:00-13:00 on Day $D$), with continuous market operation starting at DOE assignment until real-time (12:00). 
We present a scenario with congestion and demonstrate the value of SecuLEx compared to alternative management strategies.

The network is represented by the graph $\mathfrak{G} = (\mathcal{N}, \; \mathcal{E})$, with nodes $\mathcal{N} = \{T, \; B, \; C_1, \; C_2, \; C_3, \; C_4\}$ corresponding to a transformer $T$, an intermediate bus $B$, and four representative customers.
Each customer is connected to the bus $B$ that is supplied by the transformer $T$.
The interconnections are represented by the edges
$\mathcal{E} = \{(T, B), \; (B, C_1), \; (B, C_2), \; (B, C_3), \; (B, C_4)\}$.
The transformer must operate below $60$kW, which translates to a security constraint limiting the power flowing through $(T, \; B)$ to $60$kW.

Let us consider that during the day, customers have estimations of their future consumption for the midday period. Customers have the following installations, expected consumptions for that period, and known electricity prices:
\begin{itemize}
  \item $C_1$ is a load-only: net withdrawal $8$kW. They pay $0.15$€/kWh.
  \item $C_2$ has a PV installation: net injection $21$kW. They pay $0.16$€/kWh for withdrawal and get paid $0.06$€/kWh for injection.
  \item $C_3$ has a PV installation: net injection $26$kW. They pay $0.15$€/kWh for withdrawal and get paid $0.04$€/kWh for injection.
  \item $C_4$ has a PV-battery installation: PV production of $24$kW, battery capacity of $\pm 10$kW and discharging by $6$kW; net injection $30$kW. They pay $0.15$€/kWh for withdrawal and get paid $0.02$€/kWh for injection.
\end{itemize}

In this scenario, at midday, there is an expected reverse power flow from the customers to the transformer of $69$kW.
The magnitude of this power flow exceeds the transformer's $60$kW limit, causing congestion, as shown in Fig. \ref{fig:baseline_congestion}.

\begin{figure}[h!]
\centering
\includegraphics[width=0.48\textwidth]{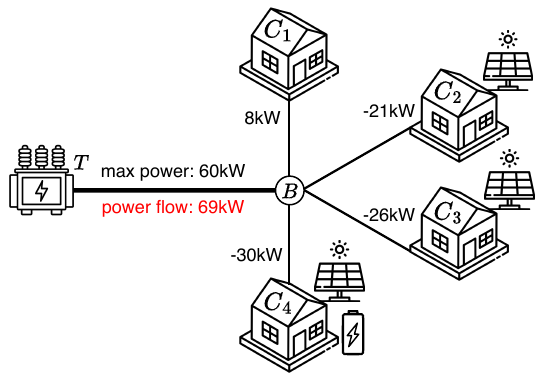}
\caption{Baseline situation where the reverse power flow ($69$kW) exceeds the transformer limit ($60$kW), causing congestion.}
\label{fig:baseline_congestion}
\end{figure}

Let us compare the outcome depending on the congestion-management scheme of the DSO, assuming that the expected customer consumption is realized at midday.

\textbf{No control.}
In this scenario, the DSO takes no action.
Without intervention, the line $(T,B)$ is overloaded by $9$kW. This causes a security issue, potentially leading to line disconnection and disrupting all customers at the cost of the DSO.

\textbf{Centralized ANM.}
This case represents a traditional management scheme where the DSO reacts in real-time when the problem is emerging.
The DSO issues curtailment commands to eliminate the overload, but cannot control customer batteries (no contractual access).

Let us assume that the DSO takes the minimal corrective actions to guarantee security and curtails PV generation equally across $C_2$, $C_3$, and $C_4$ until the limit is respected.
Thus, the total curtailed PV is $9$kW, equally divided over the three customers injecting power, and the transformer operates securely at $60$kW. Curtailment leads to an opportunity loss of $3 \cdot 0.06 + 3 \cdot 0.04 + 3 \cdot 0.02 = 0.36$€.

\textbf{Static envelope assignment.}
In this case, the DSO assigns static envelopes one day ahead (Day $D-1$), solving the lexicographic max-min allocation from Eq.~\ref{eq:CalculateDOEs_LV}.
Here, the security verification function is implemented relying on linearized (DC) power flow equations as discussed previously, leading to the simplified implementation of Eq. \ref{eq:VerifyLimits}.
 
An optimal envelope allocation is:
\begin{align*}
    \mathbf{L}_{1} &= [0,   15] \; , &
    \mathbf{L}_{2} &= [-20, 15] \; , \\
    \mathbf{L}_{3} &= [-20, 15] \; , &
    \mathbf{L}_{4} &= [-20, 15] \; .
\end{align*}
Customers are curtailed in case of envelope violation, guaranteeing secure and distributed real-time operation.

At this stage, customers without flexibility have no alternative to curtailment. Nevertheless, customers with available flexibility may decide to adapt their consumption pattern, knowing the conditions of curtailment. This choice is driven by the balance between the opportunity loss due to curtailment and the cost of changing the schedule, i.e., expected reduction of the profit from strategically charging and discharging the battery during the day.

We consider two cases, depending on Customer $C_4$:
\begin{itemize}
    \item Customer $C_4$ evaluates that changing the battery schedule may be too expensive, considering their low opportunity loss. 
    The customer maintains their planned discharge, resulting in a net injection of $30$kW, which violates their envelope and triggers $10$kW of curtailment.
    Combined with curtailment from other customers, the total curtailment reaches $17$kW during one hour, leading to an opportunity loss of $1 \cdot 0.06 + 6 \cdot 0.04 + 10 \cdot 0.02 = 0.50$€.
    
    \item Customer $C_4$ anticipates curtailment and charges their battery with $4$kW instead of injecting power, remaining within their assigned limit with a net injection of $20$kW. 
    The total curtailment is $7$kW during one hour and leads to an opportunity loss of $1 \cdot 0.06 + 6 \cdot 0.04 = 0.30$€.
\end{itemize}

In this scenario, as customers know their limits, they are incentivized to use their flexibility.
Nevertheless, static envelopes are conservative and may result, in this case, in greater curtailment compared to the centralized ANM, which only responds to the actual congestion observed in real-time.

\textbf{DOE with SecuLEx.}
Starting from the DOE allocation above, market participants can submit orders to buy/sell portions of their envelopes. In theory, this exchange is driven by their expected future energy usage and corresponding cost.

In an ideal market, each customer should submit orders to maximize their expected future profit. It includes minimizing the expected costs that may result from future curtailment, optimizing the consumption schedules depending on the hourly costs of energy, and maximizing the expected revenues from selling unused parts of their envelopes. In the hypothetical case where customers have known and inflexible future consumptions, they should ask for additional limits to avoid curtailment at the incurred marginal cost and bid unused limits at zero marginal cost. Customers with flexibility furthermore submit orders to also optimize their consumption schedules. {\color{black}In practice, uncertainty about future consumptions, market inefficiencies, and speculation are reflected in prices and volumes.

Let us illustrate the trading incentives in this context:
\begin{itemize}
    \item Customer $C_1$ is expecting to withdraw power from the network, and we assume they are not interested in entering the market.
    It may be considered a market inefficiency or may be due to the need for flexibility within their current envelope.

    \item Customer $C_2$ is expecting curtailment of $1$kW during one hour, leading to an opportunity loss of $0.06$€.
    The customer thus offers to buy $1$kW of lower limits at the price of $0.03$€/kW.

    \item Customer $C_3$ is expecting curtailment of $6$kW during one hour, leading to an opportunity loss of $6 \cdot 0.04 = 0.24$€. 
    The customer offers to buy $6$kW of lower limits at the price of $0.02$€/kW.

    \item Customer $C_4$ is expecting curtailment of $10$kW during one hour, leading to an opportunity loss of $10 \cdot 0.02 = 0.20$€, if they discharge their battery.
    On the one hand, the customer could ask $10$kW additional lower limit to avoid curtailment. 
    On the other hand, expecting that lower limits are going to be scarce on the market, they can decide to exploit their battery's flexibility to charge the battery by $10$kW. Then, the customer respects its DOE and may even bid $6$kW of lower limit. 
    Assuming this situation occurs, the order price should reflect the cost of rescheduling the battery and is assumed to be $0.02$€/kW.
\end{itemize}
All corresponding orders are represented in Table \ref{tab:orders}.
}

\begin{table}[h!] 
\centering 
\begin{tabular}{c c c c c c} 
    \toprule    ID & Customer & Type & Bound & Price [€/kW] & Quantity [kW] \\ \midrule
                $1$ & $C_2$ & Buy  & Lower & 0.03 & 1 \\
                $2$ & $C_3$ & Buy  & Lower & 0.02 & 6 \\
    \midrule    $3$ & $C_4$ & Sell & Lower & 0.02 & 6 \\
\bottomrule 
\end{tabular} 
\caption{Initial order book.} 
\label{tab:orders} 
\end{table}

Providing this order book, the market-clearing algorithm of
Eq.~\ref{eq:market-clear} can be executed and yields the following updated limits:
\begin{align*}
\label{eq:newlimits}
    \mathbf{L}_{1} &= [0,   15] \; , &  
    &\mathbf{L}_{2} = [-21, 15] \; , \\
    \mathbf{L}_{3} &= [-25, 15] \; , &
    &\mathbf{L}_{4} = [-14, 15] \; .
\end{align*}

The exchange between consumers and new envelopes corresponds to a complete clear of orders from Customer $C_2$ and Customer $C_4$, and a partial clear of the order from Customer $C_3$. The updated order book corresponds to the uncleared orders and is represented in Table \ref{tab:remaining_orders}.

Under the pay-as-bid principle, Customer $C_2$ pays $\Pi_{C_2} = 1 \cdot 0.03 = 0.03$€, Customer $C_3$ pays $\Pi_{C_3} = 5 \cdot 0.02 = 0.10$€, and Customer $C_4$ is paid $-\Pi_{C_4}= 6 \cdot 0.02 = 0.12$€.
The social welfare $SW = 0.01$€ is captured by the market operator, as a regulated fee to reduce tariffs or reinvest in the grid.

\begin{table}[h!] 
\centering 
\begin{tabular}{c c c c c c} 
    \toprule    ID & Customer & Type & Bound & Price [€/kW] & Quantity [kW] \\ \midrule
                $2$ & $C_3$ & Buy  & Lower & 0.02 & 1 \\
\bottomrule 
\end{tabular} 
\caption{Updated order book.} 
\label{tab:remaining_orders} 
\end{table}

Customers may submit additional orders as their consumption forecasts evolve, continuing until the market closure, where DOEs are frozen. 
Let us consider that no additional exchange materializes and that the expected consumptions and injections are realized in real-time. 
Customer $C_3$ is the only customer with excess injection and is curtailed by $1$kW. 
The market has created incentives for exploiting flexibility, and Customer $C_4$ has adapted its behavior to absorb some of the PV production, leading to smaller curtailment.

\textbf{Summary.} Table~\ref{tab:comparison} summarizes the performance of the different management schemes. As discussed previously, SecuLEx incentivizes the use of flexibility without requiring real-time centralized operation, leading to the lowest curtailment, highest renewable utilization, and market social welfare.
\begin{table}[!h]
\centering
\resizebox{\columnwidth}{!}{%
\begin{tabular}{l c c c c}
\toprule
\textbf{Metric} & \textbf{No Control} & \textbf{ANM} & \textbf{Static envelopes} & \textbf{SecuLEx} \\
\midrule
Curtailment [kW] & 0 & 9 & 17 or 7 & 1 \\
Renewable utilization [\%] & 100 & 87 & 76 or 90 & 99 \\
Security violation & Yes & No & No & No \\
Incentivizes flexibility & No & No & Partial & Yes \\
Market social welfare [€] & / & / & / & 0.01 \\
\bottomrule
\end{tabular}
}
\caption{Comparison of management approaches.}
\label{tab:comparison}
\end{table}
\vspace{-5mm}
\section{Discussion}
\label{discussion}
SecuLEx represents a change in network operation in two fundamental respects. First, it moves from operation strategies in reaction to security issues, such as ANM, to forward planning of envelopes that guarantee security. Second, it introduces market-driven coordination that allows responsiveness to real-time system conditions and customer preferences. This section reflects on technical aspects of the implementation and economic implications related to SecuLEx.

\subsection{Technical and computational challenges}

\subsubsection{Computational scalability} the initial envelope allocation and order clearing both require solving optimization problems, respectively, problem Eq.~\ref{eq:CalculateDOEs_LV} and Eq.~\ref{eq:market-clear}. In practice, the first problem requires solving one optimization program per customer, where the number of variables and constraints scales linearly with the number of customers. In the second problem, variables and constraints scale linearly with the number of orders. When many customers are involved, submitting many orders, the optimization problems will be time-consuming to solve. This may hinder the good working of SecuLEx. In particular, this may require rethinking the continuous operation or accepting heuristic solutions.

\subsubsection{Network security verification} in Appendix \ref{app:proof}, we show that, for radial network topology and DC power flow, the problem reduces to verifying security only at the DOE boundaries. This simplification may not hold for networks with large losses, in meshed topologies, heavily loaded feeders, or cases with advanced reactive power control. Then, ensuring security when solving optimization problems Eq.~\ref{eq:CalculateDOEs_LV} and Eq.~\ref{eq:market-clear} may become intractable in practice. Possible practical resolution schemes include using convex approximations of AC power flow, where similar simplified resolution techniques exist \cite{Chau2018ACformulation}, or evaluating constraints with high probability by sampling.

\subsection{Economic and market design implications}

\subsubsection{Initial DOE allocation} the initial DOE allocation in SecuLEx is achieved by solving the lexicographic max-min problem in Eq.~\ref{eq:CalculateDOEs_LV}. It implements an egalitarian rule, where the sizes of the smallest envelopes are maximized first. Beyond the technical challenges discussed previously, the uniqueness of the solution is not guaranteed, and the fairness requirement may lead to inefficient operation. Typically, envelopes may be small due to blocking consumers, or may be unrelated to the effective energy needs. Alternative allocation rules exist to deal with such problems, including allocation for maximum total network utilization, priority allocation favoring certain customer classes (e.g., critical loads), or market mechanisms that maximize social welfare through an auction for initial DOEs. Similarly, we maximize the size of envelopes, measured in kilowatts. Alternative objectives may account for relative sizes depending on the nominal consumption of nodes or use different metrics to quantify the sizes of DOEs.

\subsubsection{Market choices} SecuLEx implements a continuous market. First, by definition, continuous markets clear orders instantaneously and continuously when they arrive. It allows for fast reallocation depending on evolving needs and is thus the best choice when operating close to real time under high uncertainty. It nevertheless requires intensive customer involvement, e.g., compared to single-auction. In addition, fast clearing may be difficult in practice due to computational delays. Second, continuous markets often implement a pay-as-bid settlement. This choice originates as the process does not provide an aggregated market price. It may, in practice, lead to allocation inefficiency and gaming opportunities compared to uniform or Vickrey pricing \cite{Fabra2006}.

\subsubsection{Market efficiency and liquidity} in principle, markets allow optimal allocation, but inefficiencies are possible. Typically, too few participants may be involved, resulting in limited liquidity and price volatility. Alternatively, participants with high flexibility could exercise market power through strategic bidding or withholding limits. Moreover, this work assumes zero transaction costs, while in real markets significant overheads (e.g., fees, communication, and administrative costs) exist. These can erode the benefits of small-scale trades and further reduce liquidity.

\subsubsection{Regulation and surplus treatment} implementing SecuLEx in practice requires, on the one hand, guaranteeing independence of the market operation and, on the other hand, guaranteeing non-discriminatory network access \cite{EU2019ElectricityDirective}. Concerning market independence, it requires introducing a market operator distinct from the DSO, together with clear regulation. 
This aspect is largely neglected in this work. 
In addition, we discussed that market inefficiencies may lead to discriminatory network access, which shall be controlled by an independent observer in practice. 
Finally, in its current form, SecuLEx leads to a market surplus. Regulation should ensure that it is used to reduce tariffs or reinvest in the grid and avoid providing incentives to underinvest in infrastructure.

\section{Conclusion} 
\label{conclusion}
This paper presented Secure Limit Exchange (SecuLEx), a new market-based paradigm to allocate power injection and withdrawal limits, called dynamic operating envelopes (DOEs), that guarantee network security during time periods.
Under this paradigm, distribution system operators (DSOs) assign initial DOEs to customers that can be exchanged afterward through a market, while preserving security guarantees.
We formulated the DOE allocation and market-clearing problems as tractable optimization problems under DC power flow and radial network assumptions, and proved that security can then be efficiently verified.
In our illustrative low-voltage (LV) network example, DOE trading showed potential for reducing renewable curtailment and improving network utilization and social welfare relative to alternative management schemes.
Overall, SecuLEx demonstrates that envelope trading can extract more value from existing infrastructure and provide incentives for flexibility without requiring central control. Scalability to larger systems, extending the framework beyond the simplified DC power flow model to a more realistic AC model that can accurately manage voltage levels and reactive power, alternative allocation and pricing rules, and the regulatory treatment of market social welfare, remains important directions for future work.

\printbibliography

@article{ZHANG2024110076,
 author = {Zhang, W. and Sheng, W. and Liu, K. and Kang, T. and Zhan, H.},
 title = {Toward optimal voltage/{VAR} control with smart {PV}s in active distribution networks},
 journal = {Electric Power Systems Research},
 year = {2024},
 volume = {228},
 pages = {110076},
 doi = {10.1016/j.epsr.2023.110076}
}

@article{Gemine2017,
 author = {Gemine, Q. and Ernst, D. and Cornélusse, B.},
 title = {Active network management for electrical distribution systems: Problem formulation, benchmark, and approximate solution},
 journal = {Optimization and Engineering},
 year = {2017},
 volume = {18},
 pages = {587--629},
 doi = {10.1007/s11081-016-9339-9}
}

@article{Meunier2021,
 author = {Meunier, S. and Protopapadaki, C. and Baetens, R. and Saelens, D.},
 title = {Impact of residential low-carbon technologies on low-voltage grid reinforcements},
 journal = {Applied Energy},
 year = {2021},
 volume = {297},
 pages = {117123},
 doi = {10.1016/j.apenergy.2021.117057}
}

@inproceedings{SCOPE2025,
 author = {Buisseret, T. and Vangulick, D. and Ernst, D. and Bahmanyar, A. and Duchesne, L. and Georges, E. and Avramidis, I.-I.},
 title = {{SCOPE} project: an innovative market-based solution to deal with injection congestion},
 booktitle = {CIRED 2025},
 year = {2025},
 EPRINT = {https://orbi.uliege.be/handle/2268/327367},
}

@article{DOEs2024,
 author = {Alam, M. R. and Nguyen, P. T. H. and Naranpanawe, L. and Saha, T. K. and Lankeshwara, G.},
 title = {Allocation of dynamic operating envelopes in distribution networks: Technical and equitable perspectives},
 journal = {IEEE Transactions on Sustainable Energy},
 year = {2024},
 volume = {15},
 pages = {173--186},
 doi = {10.1109/TSTE.2023.3275082}
}

@article{LEM2019,
 author = {Lezama, F. and Soares, J. and Hernandez-Leal, P. and Kaisers, M. and Pinto, T. and Vale, Z.},
 title = {Local energy markets: Paving the path toward fully transactive energy systems},
 journal = {IEEE Transactions on Power Systems},
 year = {2019},
 volume = {34},
 pages = {4081--4088},
 doi = {10.1109/TPWRS.2018.2833959}
}

@article{behringer1977lexicographic,
 title={Lexicographic quasiconcave multiobjective programming},
 author={Behringer, F. A.},
 journal={Zeitschrift f{\"u}r Operations Research},
 volume={21},
 number={3},
 pages={103--116},
 year={1977},
 publisher={Springer},
 doi = {10.1007/BF01919766}
}

@article{Liu2024ASE,
 author = {Liu, B. and Xie, F. and Chai, L.},
 title = {A scalable energy management mechanism for peer-to-peer electricity market},
 journal = {IEEE Transactions on Industrial Informatics},
 year = {2024},
 volume = {20},
 pages = {9255--9265},
 doi = {10.1109/TII.2024.3383523}
}

@inproceedings{Market2015,
 author = {Li, N.},
 title = {A market mechanism for electric distribution networks},
 booktitle = {54th {IEEE} Conference on Decision and Control ({CDC})},
 year = {2015},
 pages = {2276--2282},
 doi = {10.1109/CDC.2015.7402546}
}

@article{EvAnm2014,
 author = {Zhou, L. and Li, F. and Gu, C. and Hu, Z. and Le Blond, S.},
 title = {Cost/benefit assessment of a smart distribution system with intelligent electric vehicle charging},
 journal = {IEEE Transactions on Smart Grid},
 year = {2014},
 volume = {5},
 pages = {839--847},
 doi = {10.1109/TSG.2013.2282707}
}

@INPROCEEDINGS{ReviewDERChallenges2021,
 author={Javed, H. and Nguyen, P. H. and Morren, J. and Slootweg, J.G. H.},
 booktitle={2021 56th International Universities Power Engineering Conference ({UPEC})},
 title={Review of operational challenges and solutions for {DER} integration with distribution networks},
 year={2021},
 volume={},
 number={},
 pages={6},
 doi={10.1109/UPEC50034.2021.9548198}}

@ARTICLE{PVproblems2020,
 author={Wang, S. and Dong, Yichao and Wu, Lei and Yan, Bingke},
 journal={IEEE Transactions on Smart Grid},
 title={Interval overvoltage risk based {PV} hosting capacity evaluation considering {PV} and load uncertainties},
 year={2020},
 volume={11},
 number={3},
 pages={2709-2721},
 doi={10.1109/TSG.2019.2960335}}

@article{EVproblems2018,
author = {Quiros-Tortos, J. and Ochoa, L. and Butler, T.},
year = {2018},
month = {11},
pages = {64-76},
title = {How electric vehicles and the grid work together: Lessons learned from one of the largest electric vehicle trials in the world},
volume = {16},
journal = {IEEE Power and Energy Magazine},
doi = {10.1109/MPE.2018.2863060}
}

@article{Fabra2006,
author = {Fabra, N. and von der Fehr, N. and Harbord, D.},
title = {Designing electricity auctions},
journal = {The {RAND} Journal of Economics},
volume = {37},
number = {1},
pages = {23-46},
doi = {10.1111/j.1756-2171.2006.tb00002.x},
year = {2006}
}

@article{dolatabadi2024stochastic,
 title={A stochastic game-theoretic optimization approach for managing local electricity markets with electric vehicles and renewable sources},
 author={Dolatabadi, S. H. H. and Bhuiyan, T. H. and Chen, Y. and Morales, J.},
 journal={Applied Energy},
 volume={368},
 pages={123518},
 year={2024},
 publisher={Elsevier},
 doi = {10.1016/j.apenergy.2024.123518}
}

@article{vergara2023mixed,
 title={A mixed-integer linear programming model for defining customer export limit in {PV}-rich low-voltage distribution networks},
 author={Vergara, P. P. and Giraldo, J. S. and Salazar, Mauricio and Panda, Nanda K and Nguyen, Phuong H},
 journal={Journal of Modern Power Systems and Clean Energy},
 volume={11},
 number={1},
 pages={191--200},
 year={2023},
 doi={10.35833/MPCE.2022.000400}
}

@ARTICLE{Petrou2021,
 author={Petrou, K. and Procopiou, A. T. and Gutierrez-Lagos, Luis and Liu, Michael Z. and Ochoa, Luis F. and Langstaff, Tom and Theunissen, John M.},
 journal={IEEE Transactions on Smart Grid},
 title={Ensuring distribution network integrity using dynamic operating limits for prosumers},
 year={2021},
 volume={12},
 number={5},
 pages={3877-3888},
 doi={10.1109/TSG.2021.3081371}}

@INPROCEEDINGS{gupta2024,
 author={Gupta, R. K. and Buason, P. and Molzahn, D. K.},
 booktitle={2024 {IEEE} Texas Power and Energy Conference ({TPEC})},
 title={Fairness-aware photovoltaic generation limits for voltage regulation in power distribution networks using conservative linear approximations},
 year={2024},
 volume={},
 number={},
 pages={1-6},
 doi={10.1109/TPEC60005.2024.10472277}}

@article{Chau2018ACformulation,
author = {Chau, S. and Elbassioni, K. and Khonji, M.},
year = {2018},
month = {01},
pages = {1-139},
title = {Combinatorial optimization of alternating current electric power systems},
volume = {3},
journal = {Foundations and Trends® in Electric Energy Systems},
doi = {10.1561/3100000017}
}

@misc{EU2019ElectricityDirective,
 title = {Directive ({EU}) 2019/944 of the {European} {Parliament} and of the {Council} of 5 {June} 2019 on common rules for the internal market for electricity and amending {Directive} 2012/27/{EU} (recast)},
 howpublished = {\url{https://eur-lex.europa.eu/legal-content/EN/TXT/PDF/?uri=CELEX:02019L0944-20220623}},
 institution = {European Union},
}

@techreport{Thomassen2024Curtailment,
 author  = {Thomassen, G. and Fuhrmanek, A. and Cadenovic, R. and Pozo Camara, D. and Vitiello, S.},
 title  = {Redispatch and congestion management},
 institution = {Publications Office of the European Union},
 year   = {2024},
 url   = {https://data.europa.eu/doi/10.2760/853898},
}
\appendix
\subsection{Security function proof}
\label{app:proof}
This appendix provides proof that, under radial network and DC assumptions, branch currents are monotone non-decreasing and node voltages are constant as a function of customer consumption.
As a result, if the system is secure at the two extreme points of the DOE matrix, it is also secure for every power respecting the limits.

\paragraph{Network description} as described in Section~\ref{ss:network}, we model a distribution network as a graph $\mathfrak{G} = (\mathcal{N}, \mathcal{E})$ with nodes $\mathcal{N}$, edges $\mathcal{E}$, and customers $\mathcal{C} \subset \mathcal{N}$.
The radial topology ensures the existence of a unique root node $r \in \mathcal{N}$ representing the substation or transformer connecting the feeder to the upstream grid. We subsequently assume that edges are oriented from root to leaves.
For each node $n$, we define the set of downstream nodes $\mathcal{D}_n$ as the set of nodes $m$ for which the unique path from $m$ to $r$ passes by $n$.

Each edge $e \in \mathcal{E}$ has an associated resistance $R_{e}$ and reactance $X_{e}$. We denote by $I_{e}$ the line current, by $P_{e}$ the active power flow, and by $Q_{e}$ the reactive power flow.
To each node $n \in \mathcal{N}$ corresponds an active power consumption $P_n$, and we denote by $V_n$ the voltage magnitude.

Customer power limits are represented by the matrix $\mathbf{L} = [\underline{P}, \overline{P}] \in \mathbb{R}^{|\mathcal{C}| \times 2}$, where $[\underline{P}_c, \overline{P}_c]$ are the bounds for each customer $c \in \mathcal{C}$. 
The function $VerifyLimits(\mathfrak{G}, \mathbf{L})$ checks that the network is secure for all $P \in \mathbf{L}$. 

\paragraph{Assumptions} we make the following assumptions on the network topology and physics linking electrical values.
\begin{enumerate}[label=\textbf{A\arabic*}, leftmargin=1.5cm]
    \item \textit{Radial topology.} The network graph is a tree. \label{ass:radial}
    \item \textit{DC load flow.} Balanced steady-state, with unity and flat voltages ($\forall n \in \mathcal{N} : V_n = V_r = 1$p.u.), lossless lines ($\forall e \in \mathcal{E} : R_{e} = 0$), and no reactive power ($\forall e \in \mathcal{E} : Q_{e} = 0$). \label{ass:dc}
\end{enumerate}

\begin{theorem}[Monotonicity of current]
\label{thm:dc_monotone}
Under Assumptions \ref{ass:radial}–\ref{ass:dc}, the branch current $I_{e}$ is a monotone non-decreasing function of consumption $P_c$:
\[
    \frac{\partial I_{e}}{\partial P_c} \ge 0 \; , 
    \quad \forall e\in \mathcal{E}, \forall c \in \mathcal{C} \; .
\]

\end{theorem}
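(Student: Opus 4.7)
The plan is to exploit the tree structure of $\mathfrak{G}$ granted by Assumption \ref{ass:radial} to write the signed branch current $I_{e}$ as a linear, cut-based functional of the customer consumptions, then to read off the signs of its partial derivatives. The DC assumption \ref{ass:dc} collapses currents and active powers into the same quantity in per unit, so the algebra will reduce to pure combinatorics on the tree.

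\textbf{Key steps.} First, I fix the root-to-leaves orientation declared before the theorem, so each edge $e = (u,v)$ has a well-defined downstream endpoint $v$ with associated downstream subtree $\mathcal{D}_v$. Because $\mathfrak{G}$ is a tree, removing $e$ disconnects it into exactly two components, with $\mathcal{D}_v$ on the downstream side; this is the only place where radiality is used in an essential way. Second, I sum the nodal active-power balances (KCL in power form, valid under Assumption \ref{ass:dc}) over all $n \in \mathcal{D}_v$: every internal line contribution appears once with each sign and cancels pairwise, leaving only $P_e$ on the cut, which yields $P_e = \sum_{c \in \mathcal{C} \cap \mathcal{D}_v} P_c$. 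Third, Assumption \ref{ass:dc} (flat unity voltages, lossless lines, no reactive power) gives $I_e = P_e$ in per unit, so $I_e$ is an affine function of the consumption vector $P$. Fourth, differentiating with respect to a single $P_c$ yields $\partial I_e / \partial P_c = \mathbb{1}[c \in \mathcal{D}_v] \in \{0,1\}$, which is non-negative and establishes the monotonicity.

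\textbf{Main obstacle.} The calculation itself is routine once the cut formula is in hand; the real care is bookkeeping of sign conventions. I need to keep (i) the chosen root-to-leaves edge orientation, (ii) the withdrawal-as-positive convention on $P_c$, and (iii) the identification $I_e = P_e$ rather than $I_e = -P_e$ mutually consistent throughout. A single misplaced sign would flip the monotonicity direction, so I would state all three conventions explicitly before computing. Beyond that no subtler obstacle arises: the DC assumption eliminates the nonlinear terms (losses, reactive coupling, voltage drops) that might otherwise spoil monotonicity, and the radial assumption guarantees a unique downstream/upstream split per edge, ruling out the meshed case where flows could split across multiple paths and produce counterexamples to the claim.
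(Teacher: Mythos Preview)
Your proposal is correct and follows essentially the same route as the paper: express the branch active power as the sum of downstream consumptions via the tree cut, identify $I_e$ with $P_e$ (the paper keeps the factor $1/V_r$ explicit, you normalize it to $1$ in per unit), and differentiate to obtain $\partial I_e/\partial P_c \in \{0,1/V_r\}$. Your derivation is slightly more explicit about \emph{why} the cut formula holds (telescoping KCL over $\mathcal{D}_v$) and about the sign conventions, but the argument is the same.
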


\begin{proof}
Under assumptions \ref{ass:radial} and \ref{ass:dc}, the active power flowing in each branch $e \in \mathcal{E}$ respects the balance:
\begin{align}
\label{eqp:active_power_balance}
    P_{e}  &= \sum_{k \in \mathcal{D}_e} P_k \; .
\end{align}
Furthermore, in each branch $e \in \mathcal{E}$:
\begin{align}
\label{eqp:ohm_law}
    I_{e} = \frac{1}{V_r} P_{e} \; .
\end{align}
Combining Eq. \ref{eqp:active_power_balance} and  Eq. \ref{eqp:ohm_law}, we obtain the relationship between the branch current and power consumption:
\begin{align}
    I_{e}  &= \frac{1}{V_r} \left( \sum_{k \in \mathcal{D}_e} P_k \right ) \; .
\end{align}
Let us differentiate the current with respect to any downstream node $n \in \mathcal{D}_e$:
\begin{align}
    \frac{\partial I_{e}}{\partial P_n}
    = \frac{1}{V_r} \frac{\partial }{\partial P_n} \left( \sum_{k \in \mathcal{D}_e} P_k \right )
    = \frac{1}{V_r} \geq 0 \; .
\end{align}
Let us differentiate the current with respect to any other node $n \notin \mathcal{D}_e$ that is not the root $r$:
\begin{align}
    \frac{\partial I_{e}}{\partial P_n} = 0 \; .
\end{align}
The branch currents are therefore monotone non-decreasing with respect to any consumption.

\end{proof}



\begin{theorem}[Boundary check constraints using monotonicity] \label{thm:boundary_check} Under Assumptions \ref{ass:radial}–\ref{ass:dc}, for any $\mathbf{L} = [\underline{P}, \overline{P}] \in \mathbb{R}^{|\mathcal{C}| \times 2}$, if both current and voltage security constraints are respected for $\underline{P}$ and $\overline{P}$, then the security constraints are also respected for each $P \in \mathbf{L}$.
\end{theorem}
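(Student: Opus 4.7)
The plan is to reduce the statement to two separate claims, one for voltage security and one for current security, and then dispatch the voltage one by the DC assumption and the current one by Theorem \ref{thm:dc_monotone}. The voltage claim is immediate: assumption \ref{ass:dc} fixes $V_n = V_r = 1$ p.u. for every $n \in \mathcal{N}$, so the voltage magnitude is independent of $P$. Hence if $\underline{V} \le |V_n| \le \overline{V}$ holds at the boundary allocations $\underline{P}$ and $\overline{P}$, it holds identically, in particular for every $P \in \mathbf{L}$.

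For the current claim, the plan is to use that $I_e$ is a linear, monotone non-decreasing function of customer consumption, which was established in Theorem \ref{thm:dc_monotone} (and explicitly given by $I_e = \tfrac{1}{V_r}\sum_{k \in \mathcal{D}_e} P_k$). Since only downstream customers contribute and the partial derivatives are non-negative, for any $P \in \mathbf{L}$ we have componentwise $\underline{P}_c \le P_c \le \overline{P}_c$, and therefore
\begin{align}
    I_e(\underline{P}) \;\le\; I_e(P) \;\le\; I_e(\overline{P}) \quad \forall e \in \mathcal{E}.
\end{align}

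Next I would translate the security constraint $|I_e| \le \overline{I}$ into the pair of inequalities $-\overline{I} \le I_e \le \overline{I}$. If the boundary allocations satisfy the current bound, then in particular $-\overline{I} \le I_e(\underline{P})$ and $I_e(\overline{P}) \le \overline{I}$. Chaining with the monotonicity sandwich above yields $-\overline{I} \le I_e(P) \le \overline{I}$, i.e.\ $|I_e(P)| \le \overline{I}$ for all $e \in \mathcal{E}$. Combining this with the voltage step concludes that every $P \in \mathbf{L}$ satisfies both security constraints.

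The only real subtlety, and what I would flag as the main obstacle, is that the security constraint involves the magnitude $|I_e|$, which is not itself monotone in $P$; naively one might fear that the worst case occurs at an interior point of $\mathbf{L}$. The resolution, which drives the argument, is that $I_e$ is \emph{linear} in $P$ with non-negative coefficients on downstream nodes and zero elsewhere, so the extrema of $I_e$ over the box $\mathbf{L}$ are attained exactly at the two vectors $\underline{P}$ and $\overline{P}$ (rather than at mixed vertices), making the two-point boundary check sufficient. This is where assumptions \ref{ass:radial} and \ref{ass:dc} are genuinely used, and the same argument would fail once losses, reactive coupling, or meshed topologies break the monotone-linear structure, as already noted in Section~\ref{discussion}.
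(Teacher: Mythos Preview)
Your proof is correct and follows essentially the same approach as the paper: dispatch the voltage constraints by the flat-voltage assumption, then use Theorem~\ref{thm:dc_monotone} to sandwich $I_e(P)$ between $I_e(\underline{P})$ and $I_e(\overline{P})$ and conclude that the magnitude bound holds throughout $\mathbf{L}$. Your explicit chain $-\overline{I} \le I_e(\underline{P}) \le I_e(P) \le I_e(\overline{P}) \le \overline{I}$ and the remark on why the non-monotone $|I_e|$ is nonetheless controlled are welcome clarifications, but the argument is the same as the paper's.
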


\begin{proof}
Under assumptions \ref{ass:radial} and \ref{ass:dc}, the voltage at each node is constant. Therefore, if the voltage security constraints are respected for one power profile, it is respected for each power profile.

Under assumptions \ref{ass:radial} and \ref{ass:dc}, Theorem~\ref{thm:dc_monotone} holds, and each branch current $I_{ij}$ is a monotone non-decreasing function of all customer consumptions. This implies that the maximum current and minimum current (i.e., maximum negated current) as a function of $P$ over the domain $P \in \mathbf{L}$ is achieved either in $\underline{P}$ or in $\overline{P}$. Therefore, if the current security constraints are respected for $\underline{P}$ and $\overline{P}$, then they are also respected $\forall P \in \mathbf{L}$.

\end{proof}

According to Theorem \ref{thm:boundary_check}, the $VerifyLimits$ function can be implemented as a function verifying that security is ensured at the two DOE boundary profiles $\underline{P}$ and $\overline{P}$. 
This makes the verification computationally efficient while guaranteeing security under the DC approximation.

\end{document}